\newtheorem{lemma}{Lemma}
\newtheorem{proposition}{Proposition}
\newtheorem{corollary}{Corollary}
\newtheorem{property}{Property}
\newtheorem{remark}{Remark}
\newtheorem{claim}{Claim}
\title{Massive MIMO Multicast Beamforming\\Via Accelerated Random Coordinate Descent}
\name{Shuai Wang$^{\star}$ \qquad Lei Cheng$^{\star}$ \qquad Minghua Xia$^{\dagger}$\qquad Yik-Chung Wu$^{\star}$}
			\address{$^{\star}$ The University of Hong Kong, \{swang, leicheng, ycwu\}@eee.hku.hk \\
			    $^{\dagger}$ Sun Yat-sen University, xiamingh@mail.sysu.edu.cn}
\begin{document}

\maketitle
\begin{abstract}
One key feature of massive multiple-input multiple-output systems is the large number of antennas and users.
As a result, reducing the computational complexity of beamforming design becomes imperative.
To this end, the goal of this paper is to achieve a lower complexity order than that of existing beamforming methods, via the parallel accelerated random coordinate descent (ARCD).
However, it is known that ARCD is only applicable when the problem is convex, smooth, and separable.
In contrast, the beamforming design problem is nonconvex, nonsmooth, and nonseparable.
Despite these challenges, this paper shows that it is possible to incorporate ARCD for multicast beamforming by leveraging majorization minimization and strong duality.
Numerical results show that the proposed method reduces the execution time by one order of magnitude compared to state-of-the-art methods.

\end{abstract}

\begin{keywords}
Acceleration, beamforming, large-scale, massive MIMO, random coordinate descent.
\end{keywords}

\section{Introduction}

Massive multiple-input multiple-output (MIMO) facilitates the concentration of wireless beams towards target directions \cite{infty}, and is a promising technology for 5G communication systems and beyond \cite{massive,massive2}.
On the other hand, in many emerging applications such as video streaming \cite{video} and computation offloading \cite{mec}, a large number of users could be interested in the same data, making massive MIMO multicast beamforming indispensible.

In the context of massive MIMO multicasting, a fundamental criterion for beamforming optimization is to minimize the power consumption subject to quality-of-service (QoS) constraints \cite{sdr,sdr2,review1,review2}.
However, due to high dimensionality (e.g., the number of antennas $N$ and the number of users $K$ can be in the range of hundreds or more \cite{massive,massive2}), traditional semidefinite relaxation (SDR) becomes extremely time-consuming, since the complexity of SDR is at least $O(KN^{3.5})$ \cite{sdr,sdr2,wang2}.
To reduce the computational complexity of beamforming optimization in large-scale settings, first-order methods (FOMs), which only involve the computation of gradients \cite{fom}, are recently proposed. In particular, the alternating direction method of multipliers (ADMM) has been derived in \cite{admm1,admm2} by introducing slack variables and solving the augmented Lagrangian problem. Nonetheless, the per-iteration complexity of ADMM is still $O(KN)$, and a fundamental question is: can we achieve a lower per-iteration complexity for the multicast beamforming optimization problem?

It turns out that this is possible if we only update one coordinate in each iteration \cite{cd}.
This leads to the coordinate descent method, which involves a per-iteration complexity of $O(K)$.
But unfortunately, the naive way of cyclicly updating the coordinates may diverge \cite{bcd}.
Even if it converges, the convergence rate can be slow \cite{cd}, thus offsetting the benefit brought by the low per-iteration complexity.
In fact, this is the reason why coordinate descent method received less attention compared to its full-gradient counterpart.

However, there has been a revival of interest in coordinate descent method recently \cite{rcd1,rcd2,rcd3}.
In particular, it is proved in \cite{rcd1} that if we randomly update one coordinate in each iteration, the resultant random coordinate descent (RCD) is guaranteed to converge with a rate of $O(1/m)$, where $m$ is the iteration counter.
Moreover, by adopting coordinate-wise momentum updates, RCD can be further accelerated to a faster convergence rate of $O\left(1/m^2\right)$ \cite{arcd1}.
This results in the parallel accelerated random coordinate descent (ARCD) method, which reduces the computation time by orders of magnitude compared to traditional FOMs in extensive applications, e.g., inverse problem \cite{arcd1}, $l_1$-regularized least squares problem \cite{arcd1}, supervised learning \cite{arcd2}, etc.

Nonetheless, ARCD can only be used to solve convex and smooth problems with separable constraints \cite{rcd1,rcd2,rcd3,arcd1,arcd2}.
In contrast, the multicast beamforming optimization problem is nonconvex with nonseparable constraints.
Therefore, we cannot directly apply ARCD to this application.
To this end, leveraging majorization minimization \cite{mm}, the nonconvex problem is transformed into a sequence of convex problems but with nonseparable constraints.
Furthermore, to resolve the coupling among different coordinates, the nonseparable problem is equivalently transformed into its Lagrangian dual counterpart, which is proved to be a coordinate-wise Lipschitz smooth problem with separable constraints, thus allowing ARCD to work on this dual problem.
Numerical results are further presented to demonstrate the low complexity nature of the proposed method.

\emph{Notation}.
Italic letter $x$, small bold letter $\mathbf{x}$, and capital bold letter $\mathbf{X}$ represent scalar, vector, and matrix, respectively.
The operator $[x]^+=\mathrm{max}(x,0)$, $\mathrm{Re}(x)$ takes the real part of $x$, and $|x|$ takes the modulus of $x$.
The symbol $[\mathbf{x}]_{k}$ takes the $k^{\mathrm{th}}$ element of vector $\mathbf{x}$, $[\mathbf{A}]_{k,:}$ is a row vector taking the $k^{\mathrm{th}}$ row of matrix $\mathbf{A}$, and $[\mathbf{A}]_{k,k}$ takes the diagonal element at the $k^{\mathrm{th}}$ row and the $k^{\mathrm{th}}$ column of matrix $\mathbf{A}$.
Finally, $\mathbb{E}(\cdot)$ represents the expectation of a random variable and $O(\cdot)$ represents the order of arithmetic operations.

\section{Problem Formulation and Existing Methods}

\setcounter{secnumdepth}{4}We consider a massive MIMO system consisting of a base station with $N$ antennas, and $K$ single-antenna users.
In particular, the base station transmits a signal $s$ with $\mathbb{E}[|s|^2]=1$ to all the users through the beamforming vector $\mathbf{v}\in\mathbb{C}^{N\times 1}$ with power $||\mathbf{v}||^2$.
Accordingly, the received signal $r_k\in \mathbb{C}$ at the user $k$ is
$r_{k}=\mathbf{h}^H_{k}\mathbf{v}s+n_{k}$, where $\mathbf{h}^H_{k}\in \mathbb{C}^{1\times N}$ is the downlink channel vector from the base station to user $k$, and $n_{k}\in \mathbb{C}$ is the Gaussian noise at the $k^{\mathrm{th}}$ user with power $\sigma_k^2$.
Based on the expression of $r_{k}$, the received SNR at user $k$ is $|\mathbf{g}^H_{k}\mathbf{v}|^2$, where $\mathbf{g}_{k}:=\mathbf{h}_{k}/\sigma_k$.

In multicast systems, our aim is to provide guaranteed SNR for all the users, while minimizing the total transmit power at base station: \cite{sdr}:
\begin{align}
&\mathrm{P}:\mathop{\mathrm{min}}_{\substack{\mathbf{v}}}
~||\mathbf{v}||_2^2~~~~\mathrm{s. t.}~~|\mathbf{g}^H_{k}\mathbf{v}|^2\geq \gamma,~~\forall k=1,\cdots,K,
\end{align}
where $\gamma$ is the common SNR target.
Problem $\mathrm{P}$ has been proved to be NP-hard in general \cite[Claim 1]{sdr}.
To solve $\mathrm{P}$, a traditional way is to apply SDR for convexification \cite{sdr}.
However, since SDR needs to solve a semidefinite programming (SDP) problem with $K$ variables and one semidefinite constraint of dimension $N\times N$, SDR requires a complexity of $O\left(\sqrt{N}(K^3+K^2N^2+KN^{3})\right)$ \cite{opt1}, which is too demanding when $N$ or $K$ is large.

To reduce the computational complexity of beamforming optimization, the majorization minimization (MM) framework \cite{sla,mm,admm2,wang3} can be adopted to transform $\mathrm{P}$ into a sequence of surrogate problems, 
Then, an iterative algorithm can be obtained with the following update at the $n^{\mathrm{th}}$ iteration:
\begin{align}
\mathrm{P}[n]:\mathbf{v}^{[n+1]}=\mathop{\mathrm{argmin}}_{\substack{\mathbf{v}}}~&\Bigg\{||\mathbf{v}||_2^2:
2\mathrm{Re}\left[\left(\mathbf{v}^{[n]}\right)^H\mathbf{g}_{k}\mathbf{g}^H_{k}\mathbf{v}\right]
\nonumber\\
&
-|\mathbf{g}^H_{k}\mathbf{v}^{[n]}|^2\geq \gamma,~~\forall k
\Bigg\}.
\end{align}
It has been proved in \cite{sla} that the sequence $\{\mathbf{v}^{[0]},\mathbf{v}^{[1]},\cdots\}$ converges to a Kruash-Kuhn-Tucker solution to $\mathrm{P}$.

However, even capitalizing on the MM framework, $\mathrm{P}[n]$ is still large-scale, and the interior point method (IPM) adopted in \cite{sla} would lead to time-consuming computations since the complexity of IPM is $O\left(\sqrt{K}(N^{3}+2NK)\right)$ \cite{opt1}.
To overcome this challenge, ADMM has been proposed for solving $\mathrm{P}[n]$ \cite{admm1,admm2}.
Such a method reformulates $\mathrm{P}[n]$ into an augmented Lagrangian problem and then uses hybrid gradient method to solve it.
Therefore, its per-iteration complexity is only $O(KN)$.

\section{ARCD: Accelerated Random Coordinate Descent}

While the per-iteration complexity of ADMM is lower than that of SDR, a natural question is: can we achieve a lower complexity than that of ADMM for solving $\mathrm{P}[n]$?
This section will show that it is possible under the framework of ARCD.
To begin with, the following property is established.
\begin{property}
Strong duality holds for $\mathrm{P}[n]$.
\end{property}
\begin{proof}
To prove the strong duality of $\mathrm{P}[n]$, it suffices to show that $\mathrm{P}[n]$ is convex and satisfies Slater's condition \cite{opt3}.
Since the objective  $||\mathbf{v}||_2^2$ of $\mathrm{P}[n]$ is convex quadratic and the constraints of $\mathrm{P}[n]$ are linear, $\mathrm{P}[n]$ is convex.
On the other hand, showing that $\mathrm{P}[n]$ satisfies Slater's condition is equivalent to finding a feasible point of $\mathrm{P}[n]$ satisfying all the constraints with strict inequality.
To this end, consider the solution $\mathbf{v}=(1+\chi)\cdot\mathbf{v}^{[n]}$ with $\chi>0$, and it can be shown that $2\mathrm{Re}\left[\left(\mathbf{v}^{[n]}\right)^H\mathbf{g}_{k}\mathbf{g}^H_{k}\mathbf{v}'\right]
-|\mathbf{g}^H_{k}\mathbf{v}^{[n]}|^2>\gamma$ for all $k$.
This completes the proof.
\end{proof}
Based on the result of \textbf{Property 1}, the dual problem of $\mathrm{P}[n]$ must have the same optimal value as $\mathrm{P}[n]$ \cite{opt3}.
Therefore, we propose to transform $\mathrm{P}[n]$ into its Lagrangian dual domain, which gives the following proposition.
\begin{proposition}
The dual problem of $\mathrm{P}[n]$ is
\begin{align}
&\mathrm{D}:\mathop{\mathrm{max}}_{\substack{\mathbf{q}\succeq \mathbf{0}}}
~-\Upsilon^{[n]}\left(\mathbf{q}\right),
\end{align}
where $\mathbf{q}=[q_1,\cdots,q_K]^T\in\mathbb{R}^{K\times 1}$ and
\begin{align}\label{Upsilon}
\Upsilon^{[n]}\left(\mathbf{q}\right)=&
\Big|\Big|\sum_{k=1}^Kq_k\cdot\mathbf{g}_{k}\mathbf{g}^H_{k}\mathbf{v}^{[n]}\Big|\Big|_2^2
\nonumber\\
&-\sum_{k=1}^Kq_{k}\left(\gamma+|\mathbf{g}_{k}^H\mathbf{v}^{[n]}|^2\right).
\end{align}
Moreover, denoting the optimal solution of $\mathbf{q}$ to $\mathrm{D}$ as $\mathbf{q}^*$, the optimal $\mathbf{v}^*$ of $\mathrm{P}[n]$ is
\begin{align}
&\mathbf{v}^*=\sum_{k=1}^Kq^{*}_k\cdot\mathbf{g}_{k}\mathbf{g}^H_{k}\mathbf{v}^{[n]}. \label{v*3}
\end{align}
\end{proposition}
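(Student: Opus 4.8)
The plan is to derive $\mathrm{D}$ directly from the Lagrangian of $\mathrm{P}[n]$ and to read off the recovery formula \eqref{v*3} as the stationarity condition of that Lagrangian, with \textbf{Property 1} supplying the final guarantee of optimality. Since the excerpt has already established that $\mathrm{P}[n]$ is convex and satisfies Slater's condition, strong duality holds and the KKT conditions are necessary and sufficient; this is precisely what will let me reconstruct $\mathbf{v}^*$ from $\mathbf{q}^*$ with no duality gap.

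First I would attach a multiplier $q_k\ge 0$ to each of the $K$ linear constraints and form the Lagrangian. Abbreviating $\mathbf{a}_k:=\mathbf{g}_k\mathbf{g}_k^H\mathbf{v}^{[n]}$ and using $\left(\mathbf{v}^{[n]}\right)^H\mathbf{g}_k\mathbf{g}_k^H\mathbf{v}=\mathbf{a}_k^H\mathbf{v}$ (valid because $\mathbf{g}_k\mathbf{g}_k^H$ is Hermitian), the Lagrangian reads
\begin{align}
L(\mathbf{v},\mathbf{q})=\|\mathbf{v}\|_2^2-\sum_{k=1}^K q_k\Big(2\mathrm{Re}[\mathbf{a}_k^H\mathbf{v}]-|\mathbf{g}_k^H\mathbf{v}^{[n]}|^2-\gamma\Big),
\end{align}
which, for every fixed $\mathbf{q}\succeq\mathbf{0}$, is an unconstrained, strongly convex (hence coercive) quadratic in $\mathbf{v}$, so its infimum is finite and uniquely attained.

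Next I would minimize $L$ over $\mathbf{v}$ in closed form. Using the conjugate (Wirtinger) derivative, $\partial L/\partial\mathbf{v}^*=\mathbf{v}-\sum_k q_k\mathbf{a}_k$, and setting it to zero yields the unique minimizer $\mathbf{v}=\sum_{k=1}^K q_k\mathbf{g}_k\mathbf{g}_k^H\mathbf{v}^{[n]}$, which is exactly \eqref{v*3} once $\mathbf{q}$ is the dual optimum $\mathbf{q}^*$. Substituting this minimizer back makes the $\mathbf{v}$-dependent terms collapse through $\|\mathbf{v}\|_2^2-2\mathrm{Re}[\mathbf{v}^H\mathbf{v}]=-\|\mathbf{v}\|_2^2$, leaving the dual function $g(\mathbf{q})=-\Upsilon^{[n]}(\mathbf{q})$ with $\Upsilon^{[n]}$ precisely as in \eqref{Upsilon}; maximizing $g$ over $\mathbf{q}\succeq\mathbf{0}$ is then $\mathrm{D}$.

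The substitution itself is routine algebra; the one place that needs care is the complex differentiation, where I must use the conjugate derivative so the stationarity condition is $\mathbf{v}=\sum_k q_k\mathbf{a}_k$ rather than its conjugate. The genuinely load-bearing step is the appeal to \textbf{Property 1}: strong duality is what certifies that the $\mathbf{v}$ reconstructed from $\mathbf{q}^*$ is primal optimal and attains the same value as $\mathrm{P}[n]$, which closes the argument.
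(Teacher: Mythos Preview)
Your proposal is correct and follows essentially the same route as the paper: form the Lagrangian, minimize over $\mathbf{v}$ via the Wirtinger stationarity condition to obtain $\mathbf{v}=\sum_k q_k\mathbf{g}_k\mathbf{g}_k^H\mathbf{v}^{[n]}$, and substitute back to identify the dual objective as $-\Upsilon^{[n]}(\mathbf{q})$. If anything, you are slightly more explicit than the paper in noting the strong convexity of $L(\cdot,\mathbf{q})$ and in invoking \textbf{Property~1} to certify that the reconstructed $\mathbf{v}$ at $\mathbf{q}^*$ is primal optimal.
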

\begin{proof}
The Lagrangian of $\mathrm{P}[n]$ is
\begin{align}
&\mathcal{L}(\mathbf{v},\mathbf{q})=
||\mathbf{v}||_2^2
\nonumber\\
&
+\mathop{\sum}_{k=1}^Kq_k\left\{\gamma-2\mathrm{Re}\left[\left(\mathbf{v}^{[n]}\right)^H\mathbf{g}_{k}\mathbf{g}^H_{k}\mathbf{v}\right]
+|\mathbf{g}^H_{k}\mathbf{v}^{[n]}|^2
\right\}, \nonumber
\end{align}
With the Lagrangian $\mathcal{L}$, the dual problem of $\mathrm{P}[n]$ is given by \cite{opt3}
\begin{align}
&\mathop{\mathrm{max}}_{\mathbf{q}\succeq \mathbf{0}}~\mathop{\mathrm{min}}_{\mathbf{v}}~\mathcal{L}\left(\mathbf{v},\mathbf{q}\right). \label{B1}
\end{align}
To compute $\mathop{\mathrm{min}}_{\mathbf{v}}~\mathcal{L}\left(\mathbf{v},\mathbf{q}\right)$, we set $\partial \mathcal{L}/\partial~\mathrm{conj}(\mathbf{v})=0$, and obtain
$\mathbf{v}^*=\sum_{k=1}^Kq_k\cdot\mathbf{g}_{k}\mathbf{g}^H_{k}\mathbf{v}^{[n]}$, which gives \eqref{v*3}.
Finally, putting $\mathbf{v}^*$ in \eqref{v*3} into \eqref{B1}, the objective function of \eqref{B1} is
\begin{align}
\mathop{\mathrm{min}}_{\mathbf{v}}~\mathcal{L}\left(\mathbf{v},\mathbf{q}\right)&=
\mathcal{L}\left(\sum_{k=1}^Kq_k\cdot\mathbf{g}_{k}\mathbf{g}^H_{k}\mathbf{v}^{[n]},\mathbf{q}\right)
=
-\Upsilon^{[n]}\left(\mathbf{q}\right), \nonumber
\end{align}
and the problem \eqref{B1} is equivalently written as $\mathrm{D}$.
\end{proof}

Based on \textbf{Proposition 1} and by defining
\begin{align}
&\mathbf{F}=\left[\mathbf{g}_{1}\mathbf{g}^H_{1}\mathbf{v}^{[n]},\cdots,\mathbf{g}_{K}\mathbf{g}^H_{K}\mathbf{v}^{[n]}\right]\in\mathbb{C}^{N\times K}, \label{F}
\\
&\mathbf{d}=\left[\gamma+|\mathbf{g}^H_{1}\mathbf{v}^{[n]}|^2,\cdots,\gamma+|\mathbf{g}^H_{K}\mathbf{v}^{[n]}|^2\right]^T
\in\mathbb{C}^{K\times 1}, \label{c}
\end{align}
the function $\Upsilon^{[n]}\left(\mathbf{q}\right)$ in $\mathrm{D}$ can be re-written as
$\Upsilon^{[n]}\left(\mathbf{q}\right)=||\mathbf{F}\mathbf{q}||^2-\mathbf{d}^T\mathbf{q}$, which is quadratic.
Therefore, FOMs (e.g., gradient descent, accelerated gradient projection, etc.) can be adopted by computing $\nabla\Upsilon^{[n]}\left(\mathbf{q}\right)$ in each iteration, which require a complexity of $O(KN)$.
However, in the following, a stronger property of $\Upsilon^{[n]}(\mathbf{q})$ will be established, thus allowing more efficient updates with per-iteration complexity smaller than $O(KN)$.
\begin{property}
The function $\Upsilon^{[n]}(\mathbf{q})$ in $\mathrm{D}$ is coordinate-wise $L_k$-smooth, where
\begin{align}
L_{k}&=\left[2\mathrm{Re}\left(\mathbf{F}^H\mathbf{F}\right)\right]_{k,k}.
\label{Lip3}
\end{align}
\end{property}
\begin{proof}
According to \cite{rcd1,rcd2}, $\Upsilon^{[n]}(\mathbf{q})$ is coordinate-wise $L_k$-smooth if and only if
\begin{align}\label{cwsmooth}
&\Big|\left[\nabla_{\mathbf{q}} \Upsilon^{[n]}(\mathbf{q}+t\cdot\mathbf{e}_k)\right]_k-\left[\nabla_{\mathbf{q}} \Upsilon^{[n]}(\mathbf{q})\right]_k \Big|\leq L_k|t|,
\end{align}
where $t\in\mathbb{R}$ and $\mathbf{e}_{k}$ represents the unit vector with the $k^{\mathrm{th}}$ element being 1 and others being zero.
By computing
\begin{align}\label{gradient3}
&\nabla_{\mathbf{q}}\Upsilon^{[n]}\left(\mathbf{q}\right)=
2\mathrm{Re}\left(\mathbf{F}^H\mathbf{F}\mathbf{q}\right)-\mathbf{d},
\end{align}
and putting $\nabla_{\mathbf{q}} \Upsilon^{[n]}(\mathbf{q}+t\cdot\mathbf{e}_k)$ and $\nabla_{\mathbf{q}} \Upsilon^{[n]}(\mathbf{q})$ into \eqref{cwsmooth}, 
the left hand side of \eqref{cwsmooth} becomes $\left[2\mathrm{Re}\left(\mathbf{F}^H\mathbf{F}\right)\right]_{k,k}|t|$ and the property is immediately proved.
\end{proof}

Based on \textbf{Property 2} and since the constraint $\mathbf{q}\succeq \mathbf{0}$ is separable (it can be re-written as $q_k\geq 0$ for all $k$), problem $\mathrm{D}$ can be solved by the coordinate descent method.
In particular, at iteration $m$, we first choose a coordinate, say $l$, and then update
\begin{align}
&\left[\mathbf{q}^{[m+1]}\right]_l=\left(\left[\mathbf{q}^{[m]}\right]_l-\frac{1}{L_l}
\left[\nabla_{\mathbf{q}} \Upsilon^{[n]}(\mathbf{q}^{[m]})\right]_l\right)^+,\label{rcd}
\end{align}
where the $l^{\mathrm{th}}$ coordinate-wise gradient
\begin{align}\label{cwg}
\left[\nabla_{\mathbf{q}} \Upsilon^{[n]}(\mathbf{q}^{[m]})\right]_l=&
2\mathrm{Re}\left[\left(\mathbf{F}^H\mathbf{F}\right)_{l,:}\mathbf{q}^{[m]}\right]-[\mathbf{d}]_l.
\end{align}
It can be seen from \eqref{rcd} and \eqref{cwg} that the complexity of updating \eqref{rcd} is only $O(K)$.

The next question is how to choose the index $l$.
A straightforward idea is to choose $l$ in a cyclic manner, i.e., $l=1,2,\cdots,K,1,2,\cdots,K$.
But this may lead to slow convergence of the algorithm \cite{cd}.
A better way is to choose $l$ as a random integer number ranging from $1$ to $K$ with equal probability, i.e., $\mathrm{Pr}(l=1)=\cdots=\mathrm{Pr}(l=K)=1/K$.
With such a random schedule and the coordinate update in \eqref{rcd}, the sequence $\{\mathbf{q}^{[0]},\mathbf{q}^{[1]},\cdots\}$ converges to the optimal $\mathbf{q}^*$ to $\mathrm{D}$ with a convergence rate of $O(1/m)$ \cite{rcd1,rcd2}.
This is the so-called RCD method.

However, the RCD method can still be improved in the following two ways.
First, the update of \eqref{rcd} is sequential, which means that the next iteration must wait before the current iteration is completed.
If a block of coordinates is updated in parallel, the running time of RCD can be further reduced \cite{rcd2,rcd3}.
Second, there is a gap between the convergence rate $O(1/m)$ of RCD and the best known convergence rate $O(1/m^2)$ \cite{yurii1} for solving smooth problems.
This indicates that we can consider adding momentums to accelerate the convergence of RCD \cite{arcd1}.

Based on the above observations, the following ARCD is adopted for solving $\mathrm{D}$.
More specifically, at the $m^{\mathrm{th}}$ iteration, instead of generating a random number $l$, we generate a random set $\mathcal{Y}^{[m]}$ with
\begin{align}
&|\mathcal{Y}^{[m]}|=Y,~~~~~\mathcal{Y}^{[m]}\subseteq\{1,\cdots,K\},
\nonumber\\
&\mathrm{Pr}(1\in\mathcal{Y}^{[m]})=\cdots=\mathrm{Pr}(K\in\mathcal{Y}^{[m]}), \label{Ym}
\end{align}
and update all the coordinates in $\mathcal{Y}^{[m]}$ as
\begin{align}
&\left[\mathbf{q}^{[m+1]}\right]_i=\Bigg\{\left[\mathbf{q}^{[m]}\right]_i-\frac{1}{Kc^{[m]}L_i}
\nonumber\\
&
\times\left[\nabla_{\mathbf{q}} \Upsilon^{[n]}\left(\mathbf{q}^{[m]}+(c^{[m]})^2\mathbf{z}^{[m]}\right)\right]_i\Bigg\}^+,
~~\forall i\in \mathcal{Y}_m,\label{arcd}
\end{align}
where the momentum $(c^{[m]})^2\mathbf{z}^{[m]}$ is added, with $\mathbf{z}^{[m]}$ being the direction at the $m^{\mathrm{th}}$ iteration and $c^{[m]}$ being the step-size to control the importance of $\mathbf{z}^{[m]}$.

It can be seen from \eqref{arcd} that different coordinates in $\mathcal{Y}^{[m]}$ are updated in parallel.
On the other hand, by choosing \begin{align}
c^{[m+1]}=&\frac{1}{2}\left(\sqrt{(c^{[m]})^4+4(c^{[m]})^2}-(c^{[m]})^2\right), \label{cm}
\\
\left[\mathbf{z}^{[m+1]}\right]_i=&\left[\mathbf{z}^{[m]}\right]_i-\frac{1}{(c^{[m]})^2}\cdot\left(1-\dfrac{Kc^{[m]}}{Y}\right)
\nonumber\\
&\times
\left(\left[\mathbf{q}^{[m+1]}\right]_i-\left[\mathbf{q}^{[m]}\right]_i\right),~~\forall i\in \mathcal{Y}_m, \label{z}
\end{align}
with the initial $c^{[0]}=Y/K$ and $\mathbf{z}^{[0]}=\mathbf{0}$, $\mathbf{q}^{[m+1]}$ computed using \eqref{arcd} is guaranteed to converge to the optimal solution to $\mathrm{D}$ with a convergence rate $O(1/m^2)$ \cite[Theorem 3]{arcd1}.
Since $\mathrm{D}$ is equivalent to $\mathrm{P}[n]$ and $\mathrm{P}[n]$ represents a surrogate problem for $\mathrm{P}$, problem $\mathrm{P}$ can be solved via MM and ARCD in the Lagrangian dual domain. The entire procedure is summarized in Algorithm 1.

\begin{algorithm}[!t]
    \caption{Solving $\mathrm{P}$ via ARCD.}
        \begin{algorithmic}[1]
 \State \textbf{Input} $\{\mathbf{g}_k,\gamma\}$.
            \State  Initialize $\mathbf{v}^{[0]}$ with a feasible $\mathbf{v}$. Set $\mathbf{q}^{*}=\mathbf{0}$ and $n=0$.
            \State \textbf{Repeat} (MM iteration)
            \State \ \ \ Compute $\mathbf{F}$ in \eqref{F} and $\mathbf{d}$ in \eqref{c}.
            \State \ \ \ Initialize $\mathbf{q}^{[0]}=\mathbf{q}^{*}$ and set $Y=K/5$.
            \State \ \ \ Set $c^{[0]}=Y/K$ and $\mathbf{z}^{[0]}=\mathbf{0}$. Set $m=0$.
            \State \ \ \ \textbf{Repeat} (ARCD iteration)
            \State \ \ \ \ \ \ \textbf{Generate} a random set $\mathcal{Y}_m$ according to \eqref{Ym}.
            \State \ \ \ \ \ \ \textbf{Parallel For} $i\in\mathcal{Y}_m$
            \State \ \ \ \ \ \ \ \ \ Update $\left[\mathbf{q}^{[m+1]}\right]_i$ according to \eqref{arcd}.
            \State \ \ \ \ \ \ \ \ \ Update $\left[\mathbf{z}^{[m+1]}\right]_i$ according to \eqref{z}.
            \State \ \ \ \ \ \ \textbf{End}
            \State \ \ \ \ \ \ Update $c^{[m+1]}$ according to \eqref{cm}.
            \State \ \ \ \ \ \ Set $m:=m+1$.
            \State \ \ \ \textbf{Until} $|\Upsilon^{[n]}(\mathbf{q}^{[m]})-\Upsilon^{[n]}(\mathbf{q}^{[m-1]})|<10^{-7}$.
            \State \ \ \ Set $\mathbf{q}^{*}=\mathbf{q}^{[m]}$.
            \State \ \ \ Set $\mathbf{v}^{[n+1]}=\sum_{k=1}^Kq_k^{*}\mathbf{g}_{k}\mathbf{g}^H_{k}\mathbf{v}^{[n]}$.
            \State \ \ \ Set $n:=n+1$.
            \State \textbf{Until} the stopping criterion of MM is met.
        \end{algorithmic}
\end{algorithm}

\section{Simulation Results}

This section presents simulation results to verify the performance of the proposed ARCD.
In particular, each random channel is generated according to $\mathcal{CN}(\mathbf{0},\varrho\mathbf{I})$ \cite{wang3},
where the path-loss is $\varrho=-90~\mathrm{dB}$.
It is assumed that the noise power $\sigma^2_1=\cdots=\sigma^2_K=-80~\mathrm{dBm}$, which includes thermal noise and receiver noise \cite{wireless}.
Each point in the figures is obtained by averaging over $100$ simulation runs, with independent channels between consecutive runs.
All problem instances are solved by Matlab R2015b on a desktop with Intel Core i5-4570 CPU at 3.2 GHz and 8GB RAM.
For comparison, we also simulate the MM-IPM\footnote{The MM-IPM \cite{sla} is implemented using the Matlab software CVX Mosek \cite{opt3}.} \cite{sla}, the ADMM \footnote{The ADMM is implemented by introducing slack variables $\{\mathbf{w}_k=\mathbf{v}\}_{k=1}^K$ \cite{admm1} and adding a consensus penalty $a/2\sum_{k=1}^K||\mathbf{v}-\mathbf{w}_k+\bm{\eta}_k||_2^2$, with $a=2/\sqrt{N}$ \cite{admm2} and the dual variables being $\{\bm{\eta}_k\}$.} \cite{admm1,admm2}, and the asymptotic method\footnote{The asymptotic solution is implemented by assuming $N\rightarrow+\infty$ \cite{infty}, and this method is used as initialization for the other simulated methods.} \cite{infty}.

To evaluate the solution quality and the running time of Algorithm 1, we simulate the case of $N=200$ with $K\in\{50, 100, 200, 500\}$. Notice that the case of large $K$ is very important for future crowd sensing applications.
For the methods based on MM, the number of MM iterations is $20$ \cite{sla,admm1,admm2}.
Furthermore, the ADMM for solving $\mathrm{P}[n]$ stops when the change of objective functions between consecutive iterations is smaller than $\tau=10^{-5}$ \cite{admm1}.
If ADMM fails to reach the above condition within $2000$ iterations, it stops and outputs the result at iteration $2000$ \cite{admm1,admm2}.
It can be observed from Fig. 1a that Algorithm 1 significantly outperforms the asymptotic solution \cite{infty}, and slightly outperforms the ADMM \cite{admm1,admm2}.
In fact, Algorithm 1 achieves the same power consumption as the MM-IPM \cite{sla}.
However, as illustrated in Fig. 1b, Algorithm 1 only requires less than $2$ seconds to finish for all the simulated value of $K$.
Compared to MM-IPM and ADMM, Algorithm 1 saves at least $90\%$ (one order of magnitude) of the computation times.

\begin{figure}
  \centering
  \subfigure[]{
    \label{fig:subfig:a} %% label for first subfigure
    \includegraphics[width=35mm]{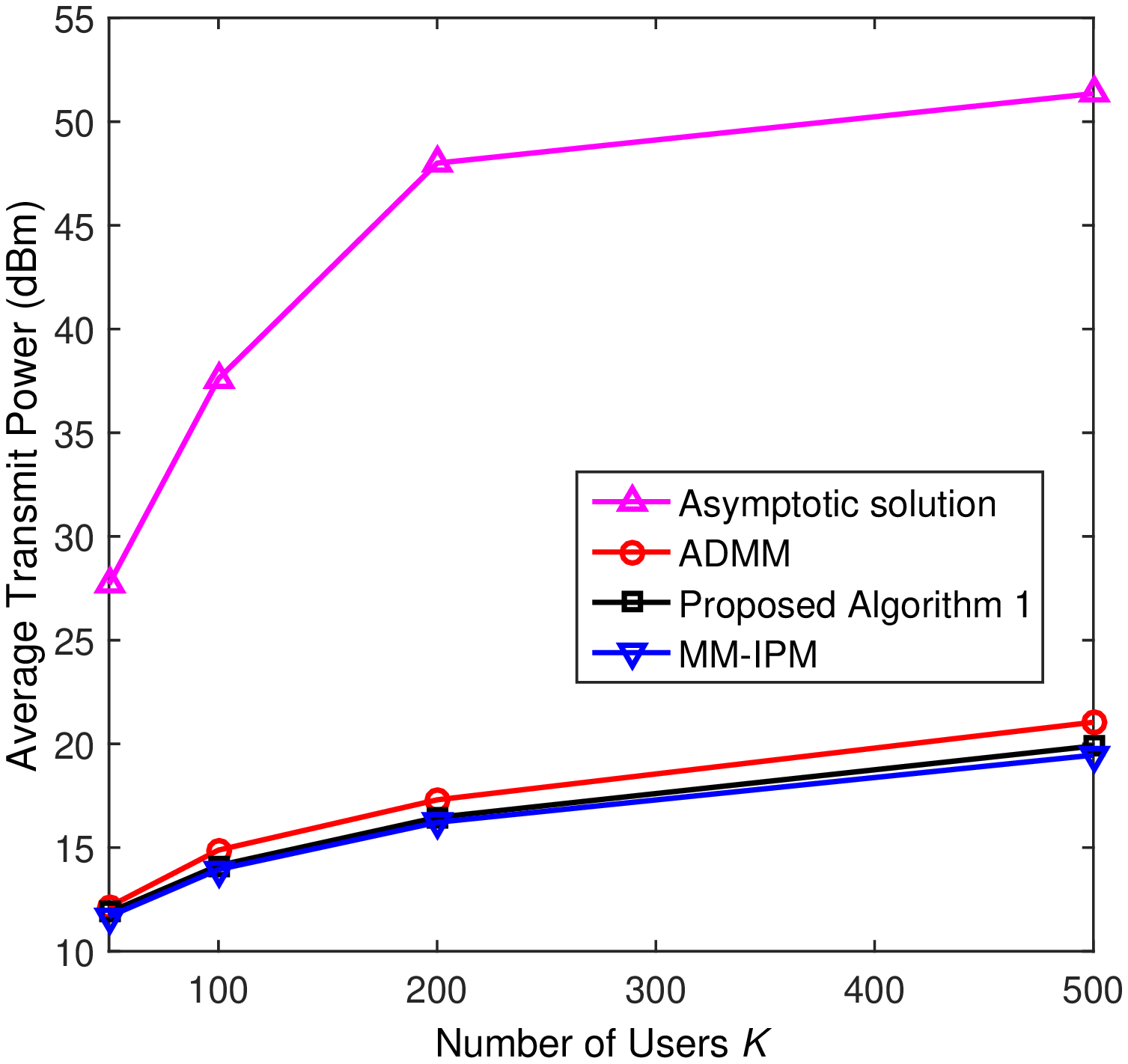}}
  \hspace{0in}
  \subfigure[]{
    \label{fig:subfig:b} %% label for second subfigure
    \includegraphics[width=35mm]{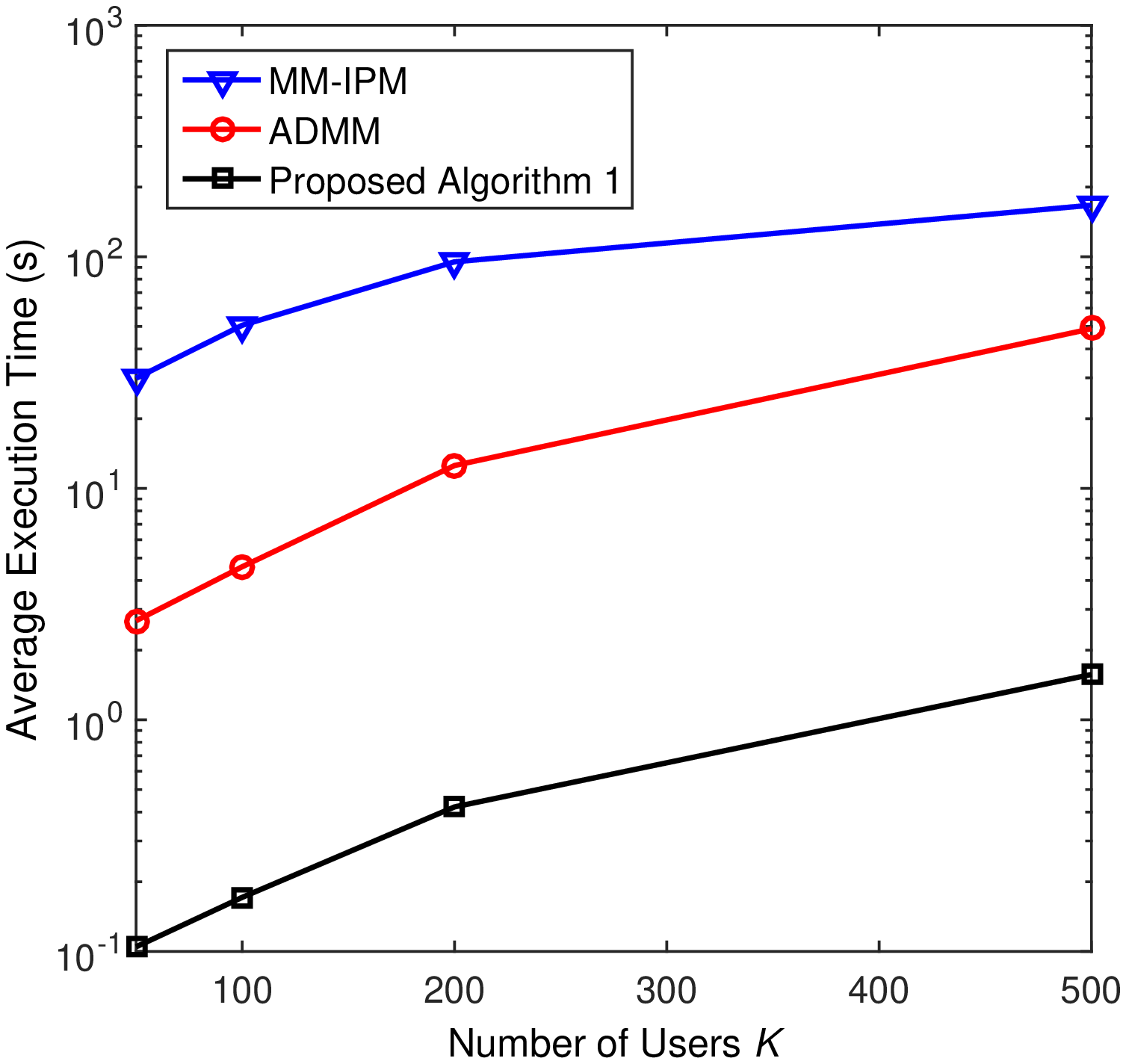}}
  \caption{(a) Transmit power in dBm versus number of users $K$ for the case of $N=200$ and $\gamma=10~\mathrm{dB}$; (b) Average execution time versus number of users $K$ for the case of $N=200$ and $\gamma=10~\mathrm{dB}$.}
  \label{fig:subfig} %% label for entire figure
\end{figure}

\section{Conclusions}

This paper studied the massive MIMO multicast beamforming optimization problem.
With majorization minimization and strong duality, the primal problem was transformed into a coordinate-wise Lipschitz smooth problem with separable constraints.
By further adopting ARCD, lower complexity than that of existing algorithms was achieved.
Simulation results showed that the proposed method reduces the execution time by one order of magnitude compared to existing methods while guaranteeing the same performance.

\bibliographystyle{IEEEbib}
\bibliography{strings,refs}

\end{document}